\newtheorem{problem}{Problem}
\DeclareMathOperator*{\argmin}{argmin}
\DeclareMathOperator*{\cost}{cost}
\title{Regret-optimal measurement-feedback control}
\author{%
 \Name{Gautam Goel} \Email{ggoel@caltech.edu}\\
 \addr Caltech
 \AND
 \Name{Babak Hassibi} \Email{hassibi@caltech.edu}\\
 \addr Caltech%
}
\begin{document}

\maketitle

\begin{abstract}
    We consider measurement-feedback control in linear dynamical systems from the perspective of regret minimization. Unlike most prior work in this area, we focus on the problem of designing an online controller which competes with the optimal dynamic sequence of control actions selected in hindsight, instead of the best controller in some specific class of controllers. This formulation of regret is attractive when the environment changes over time and no single controller achieves good performance over the entire time horizon. We show that in the measurement-feedback setting, unlike in the full-information setting, there is no single offline controller which outperforms every other offline controller on every disturbance, and propose a new $H_2$-optimal offline controller as a benchmark for the online controller to compete against. We show that the corresponding regret-optimal online controller can be found via a novel reduction to the classical Nehari problem from robust control and present a tight data-dependent bound on its regret. 
\end{abstract}

\section{Introduction} 
The central question in control theory is how to regulate the behavior of an evolving system with state $x$ that is perturbed by a  disturbance $w$ by dynamically adjusting a control action $u$; in the measurement-feedback setting, the information available to the controller is restricted to observations $y$ which are corrupted by noise $v$.
Traditionally, this question has been studied in two distinct settings: in $H_2$ control, the disturbance $w$ and noise $v$ are assumed to be generated by stochastic processes and the controller is designed so as to minimize the \textit{expected} control cost, whereas in $H_{\infty}$ control $w$ and $v$ are assumed to be generated adversarially and the controller is designed to minimize the \textit{worst-case} control cost. Both $H_2$ and $H_{\infty}$ controllers suffer from an obvious drawback: they are designed with respect to a specific class of disturbances, and if the true disturbances fall outside of this class, may exhibit poor performance. Indeed, the loss in performance can be arbitrarily large if the disturbances are carefully chosen \cite{doyle1978guaranteed}.

This observation naturally motivates the design of \textit{adaptive} controllers, which dynamically adjust their control strategy as they sequentially observe the disturbances instead of blindly following a prescribed strategy. This problem has attracted much recent attention in machine learning (e.g. \cite{goel2019online, goel2020regret, hazan2020nonstochastic, cohen2019learning, foster2020logarithmic, abbasi2011regret, agarwal2019online, dean2018regret}), mostly from the perspective of regret minimization. In this framework, the online controller is chosen so as to minimize the difference between its cost and the best cost achievable in hindsight by a controller from some fixed class of controllers. The resulting controllers are adaptive in the sense that they seek to minimize cost irrespective of how the disturbances are generated.

In this paper, we take a somewhat different approach to the design of adaptive controllers. Instead of designing a controller to minimize regret against the best controller selected in hindsight from some specific class, we instead focus on designing a controller which minimizes regret against the \textit{optimal dynamic sequence of control actions selected in hindsight}. We believe that this formulation of regret minimization in control is more attractive than the standard formulation, where the controller learns the best fixed controller in some specific class, for two fundamental reasons. Firstly, it is more general: instead of restricting our attention to some specific class of controllers (e.g. state feedback, LTI controllers, etc), we instead try to compete with the globally optimal dynamic sequence of control actions, without assuming any specific structure. Secondly, and perhaps more importantly, the controllers we obtain are more likely to perform well in dynamic environments, where the disturbance-generating process varies over time. Consider, for example, a scenario in which the disturbances alternate between being generated by a stochastic process and being generated adversarially. When the disturbances are stochastic, an optimistic controller (such as the $H_2$ controller) will perform well; conversely, when the disturbances are adversarial, a more conservative, pessimistic controller (such as an $H_{\infty}$ controller) will perform well. No single controller will perform well over the entire time horizon; hence any online algorithm which tries to learn the best static controller will incur high cumulative cost. A controller which minimizes regret against the optimal dynamic sequence, however, is not constrained to converge to any static controller, and hence can potentially outperform standard regret-minimizing control algorithms  when the environment is dynamic.

Several recent papers \cite{goel2017thinking, goel2019online, goel2020power, goel2020regret} also consider the problem of designing controllers which compete with the optimal offline dynamic sequence of control actions. All of these papers focus on the \textit{full-information} setting, where the controller observes the true state $x$ and disturbance $w$. This paper is the first to study this problem in the more challenging \textit{measurement-feedback} setting, where the controller only has access to a noisy measurement $y$ of the state $x$. This setting presents several unique challenges 
which do not arise in the full-information setting. The key distinction is that \textit{in the measurement-feedback setting, the information sequence observed by a controller depends on the previous control actions selected by that controller}. In essence, the controller is caught in a feedback loop: its control actions depend on the observations it makes, but those observations depend on its previous control actions. These feedback loops make it challenging to analyze control through the lens of regret, since the premise of regret is to compare online policies (which receive information sequentially) to counterfactual offline policies (which receive the \textit{same} information, but all at once, at the start of the game).

\subsection{Contributions of this paper}
We make two main contributions in this paper. First, we consider measurement-feedback control in the offline (noncausal) setting, where the offline controller can compute the measurements $y$ that would counterfactually arise if the offline controller were to select some control $u$. We show that there does not exist a single ``globally optimal" offline measurement-feedback controller which always achieves lower cost than every other offline measurement-feedback controller. This stands in stark contrast to the full-information setting, where a single offline controller dominates every other \cite{goel2020regret}. We derive a new offline controller $K_{nc}$ which optimal in the $H_2$-sense. Second, we consider the problem of designing an online (causal) controller $K_c$ which minimizes regret against the offline controller $K_{nc}$. We show that $K_c$ can be found using a novel reduction to the Nehari problem, which attracted much attention in the robust control community starting in the 1970's. We completely characterize $K_c$ in terms of the solutions to the Nehari problem and present a tight data-dependent bound on its regret.

\subsection{Related work}
There has been a surge of interest in regret minimization in control in the past few years, to the point that we are able to survey only a tiny fraction of the papers in this area. One of the first works in this area was \cite{abbasi2011regret}, which focused on regret minimization when the noise is stochastic. A more general setting where the noise is stochastic but the costs are adversarial was considered in \cite{cohen2019learning}. A series of more recent papers (e.g. \cite{hazan2020nonstochastic, agarwal2019online, dean2018regret, foster2020logarithmic}) consider the setting where the noise is adversarial. All of these works consider a setting where the online learner is trying to minimize static regret against a fixed benchmark controller, often taken to be a state feedback or LTI controller.

A key distinction between this paper and these papers is that we focus on designing an online controller which competes against an optimal offline dynamic sequence of control actions. This problem was also studied in \cite{goel2019online} (albeit through the lens of \textit{competitive ratio} rather than regret) where it was shown that the Online Balanced Descent algorithm introduced in \cite{chen2018smoothed} could be used to give some performance guarantees in the LQR setting; this result was improved in \cite{goel2019beyond}. We note that the reduction in those works relied crucially on  very strong assumptions about the structure of the dynamics, such as invertiblility of the control matrix. In this paper, we are able to remove all such assumptions and prove results about arbitrary LQR control systems. Our results in measurement-feedback control parallel recent results in the much simpler full-information setting obtained in \cite{goel2020regret}.

\section{Preliminaries}
We consider a linear dynamical system governed by the following evolution equation: 
\begin{equation} \label{ss-dynamics}
x_{t+1} = A_tx_t + B_{u, t} u_t + B_{w, t} w_t.
\end{equation}
Here $x_t \in \mathbb{R}^n$ is a state variable we are interested in regulating, $u_t \in \mathbb{R}^m$ is a control variable which we can dynamically adjust to influence the evolution of the system, and $w_t \in \mathbb{R}^n$ is unknown environmental noise.  We formulate the problem of regulating the system over a finite time horizon $t = 0 \ldots T-1$ as an optimization problem, where the goal is to select the control actions so as to minimize the LQR cost 
\begin{equation} \label{ss-lqr-cost}
\cost(w, u) =  x_T^{\top}Q_T x_t + \sum_{t=0}^{T-1} x_t^{\top}Q_t x_t + u_t^{\top}R_t u_t,
\end{equation}
where $Q_t, R_t \succ 0$ for $t = 0, \ldots T - 1$ and $Q_T \succ 0$ is a terminal cost. The sequence of matrices $\{A_t, B_{u, t}, B_{w, t}, Q_t, R_t\}_{t=0}^{T-1}$ is assumed to be known. We assume without any loss of generality that $x_0 = 0$ and $u_t$ is scaled such that $R_t = I$; we emphasize that this imposes no real restriction, since for all $R_t \succ 0$ we can always rescale $u_t$ so that $R_t = I$.

We distinguish between two types of control problems, depending on what information is available to the controllers. In the \textit{full information} setting, we assume the controller has access to the actual state $x$ and disturbance $w$. In the more challenging \textit{measurement-feedback} setting studied in this paper, we assume that the controller only has access to noisy measurements $y$ of the state $x$:
\begin{align} \label{observation-model}
y_t = C_t x_t + v_t,
\end{align}
where $C_t \in \mathbb{R}^{p \times n}$ and $v_t \in \mathbb{R}^p$. We emphasize that this observation model can represent a significant restriction on the information available to the controller; consider, for example, a scenario where $p \ll n$ and the noise $v_t$ is selected adversarially, so that the controller only has access to compressed, highly corrupted information about the state.

\subsection{Causal, noncausal, and anticausal operators}
We distinguish between two types of controllers: \textit{causal} (online) controllers, which select the control action $u_t$ using only the information up to time $t$, and \textit{noncausal} (offline) controllers, which may select $u_t$ using all the  information over the full time horizon. We say that a linear operator $\mathcal{S}$ is causal if it is block lower-triangular; if $u = \mathcal{S}v$, then each $u_t$ is a linear function of $v_0, \ldots, v_t$, so $u$ is a causal function of $v$. Similarly, we say that $\mathcal{S}$ is strictly anticausal if it is strictly block upper-triangular. We say that $\mathcal{S}$ is noncausal if it is not causal; in this case each $u_t$ may potentially depend on some or all of $v_{t+1}, \ldots, v_{T-1}$. We define $M_{+}$ and $M_{-}$ to be the causal and strictly anticausal components of a matrix $M$, so that $M_{+} + M_{-} = M$. If $M$ is positive definite, we use the notation $M^{1/2}$ to mean the unique causal matrix $L$ such that $L^{\top}L = M$. 

\subsection{The input-output approach to control}
It is convenient to encode the dynamics in ``operator form", instead of the state-space form (\ref{ss-dynamics}).
Let $s_t = Q^{1/2}x_t$ for $t = 0, \ldots T-1$ and define $$u =  \begin{bmatrix} u_0 \\ \vdots\\  u_{T-1} \end{bmatrix}, \hspace{5mm} s = \begin{bmatrix} s_0 \\ \vdots\\  s_{T-1} \end{bmatrix}, \hspace{5mm} y = \begin{bmatrix} y_0 \\ \vdots\\  y_{T-1} \end{bmatrix}, \hspace{5mm} w = \begin{bmatrix} w_0 \\ \vdots\\  w_{T-1} \end{bmatrix},  \hspace{5mm} v = \begin{bmatrix} v_0 \\ \vdots\\  v_{T-1} \end{bmatrix}.$$  




\noindent With this notation, the LQR cost (\ref{ss-lqr-cost}) takes a very simple form: $$\cost(w; u) = \|s\|_2^2 + \|u\|_2^2. $$Clearly $s = Fu + Gw$ and $y = Ju + Lw + v$ where $F, G, J$ and $L$ are appropriately defined strictly causal operators encoding the dynamics (\ref{ss-dynamics}) and observation model (\ref{observation-model}). In this paper, we focus on control strategies where the control $u$ is a \textit{linear} function of the measurements: $u = Ky$, for some matrix $K$, which we think of as a controller mapping observations to control actions. Solving for $y$ in terms of $w, v$, and $K$, we see that $y = (I - JK)^{-1}(Lw + v)$. Define the Youla parameterization $Q = K(I - JK)^{-1}$. Notice that $Q$ is causal if and only if $K$ is causal, and furthermore, we can easily recover $K$ from $Q$: 
\begin{equation} \label{kfromq}
K = Q(I + QJ)^{-1}.
\end{equation}
Recall that every controller $K$ has an associated transfer operator  $$\mathcal{T}_{K}:   \begin{bmatrix} w \\ v \end{bmatrix}  \rightarrow \begin{bmatrix} s \\ u \end{bmatrix}. $$
We can write $\mathcal{T}_K$ in terms of $F, G, L$, and $Q$ as $$\mathcal{T}_K = \begin{bmatrix}  FQL + G & FQ \\ QL &  Q \end{bmatrix}.$$ 
We can write the LQR cost incurred by the controller $K$ on the instance $(w, v)$ as $$\cost(K, w, v) =  
\begin{bmatrix} w \\ v \end{bmatrix}^{\top}\mathcal{T}_K^{\top} \mathcal{T}_K \begin{bmatrix} w \\ v \end{bmatrix}. 
 $$



\subsection{$H_{\infty}$-optimal control and regret-optimal control}
Our approach to regret-optimal control is strongly influenced by classic techniques from robust control, whose central objective is the design of $H_{\infty}$-optimal controllers:

\begin{problem}[$H_{\infty}$-optimal measurement-feedback control] \label{hinf-optimal-control-problem} Find a causal controller $K$ that minimizes $$\sup_{w, v} \frac{\cost(K, w, v)}{ \|w\|_2^2 + \|v\|_2^2}.  $$
\end{problem}

\noindent This objective has the natural interpretation of minimizing the worst-case cost incurred by the online controller, normalized by the energy in the disturbance $w$ and noise $v$. In this paper, instead of minimizing the worst-case \textit{cost}, our goal is to minimize the worst-case \textit{regret}.  This problem has a natural analog of the $H_{\infty}$ problem:

\begin{problem}[Regret-optimal control problem] \label{regret-optimal-control-problem}
\noindent Given a benchmark controller $K_0$, find a causal controller $K$ that minimizes $$\sup_{w, v} \frac{\cost(K, w, v) - \cost(K_0, w, v)}{{ \|w\|_2^2 + \|v\|_2^2}}.  $$
\end{problem}

As is common in the $H_{\infty}$ literature, we consider the relaxation:

\begin{problem} [Regret-suboptimal control problem] \label{regret-suboptimal-control-problem}
Given a performance level $\gamma > 0$ and a benchmark controller $K_0$, find a causal controller such that $$ \frac{\cost(K, w, v) - \cost(K_0, w, v) }{  \|w_t\|_2^2 + \|v\|_2^2} < \gamma  $$ for all disturbances $w$, or determine whether no such policy exists.
\end{problem}

We emphasize that if we can solve the regret-suboptimal  problem , we can easily recover the solution to the regret-optimal problem via bisection on $\gamma$.

\subsection{The Nehari problem}
A key idea in this paper is to reduce the regret-optimal measurement-feedback control problem to the Nehari problem, which asks how best to approximate an anticausal matrix by a causal matrix:

\begin{problem}[Nehari problem]
Let $W$ be an strictly anticausal matrix. Find a causal matrix $X$ such that $$\|X - W\|_2$$ is minimized. 
\end{problem}
Like the $H_{\infty}$-optimal control problem, the Nehari problem is generally solved by first solving a suboptimal problem at level $\gamma$, and then finding the optimal problem by bisection on $\gamma$: 

\begin{problem}[Suboptimal Nehari problem]
Let $W$ be an strictly anticausal matrix. Given a performance level $\gamma > 0$, find a causal matrix $X$ such that $$\|X - W\|_2 < \gamma, $$ or determine whether no such $X$ exists.
\end{problem}
We emphasize that there exist efficient numerical algorithms to solve the suboptimal Nehari problem. The details of these algorithms are beyond the scope of this paper; we refer the reader to \cite{gohberg1994fast, hassibi1999indefinite} for details. 

\section{Noncausal measurement-feedback controllers} \label{noncausal-sec}
The goal of this paper is to derive an online controller which minimizes regret against the optimal dynamic sequence of control actions selected in hindsight. In the full-information setting, it is clear what this means: we design an online controller which minimizes regret against the sequence of control actions $$u^* = \argmin_u \cost(w, u).$$
It was recently shown in \cite{goel2020regret} that $u^* = K^*w$, where $K^* = -(I + F^{\top}F)^{-1}F^{\top}G$.
We can hence view the optimal dynamic sequence of control actions selected in hindsight as precisely those actions selected by the optimal noncausal controller $K^*$, and design our online controller to compete against this $K^*$. 


In the measurement-feedback setting we study in this paper, it is much less clear which noncausal benchmark controller we should select for the online controller to compete against. This is because \textit{the information sequence observed by any controller $K$ depends on the previous choices of the controller}. This is easy to see from the two relations $$u = Ky, \hspace{5mm} y = Ju + Lw + v.$$
In essence, the controller $K$ is caught in a feedback loop: its control actions depend on the observations it makes ($u = Ky$), but those observations depend on its previous control actions ($y = Ju + Lw + v$). In the full-information setting this issue does not arise: we assume that the controller observes the true disturbance $w$, irrespective of what control actions they previously selected, and can thus define the optimal noncausal controller $K^*$ to be the unique controller which selects the optimal control $u$ in response to $w$. Given that any offline controller will receive a different set of observations than the online controller, which offline controller should we pick for the online controller to compete against? One natural idea, in analogy with the full-information setting, is to select the ``optimal" offline measurement-feedback controller, i.e. one which always incurs less cost than any other offline controller. Our first result is that no such controller exists:

\begin{theorem} \label{nonexistence-thm}
There does not exist a noncausal controller $K$ such that $\cost(K, w, v) <  \cost(K', w, v)$ for all noncausal controllers $K' \neq K$ and all instances $(w, v)$.
\end{theorem}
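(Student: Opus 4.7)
The plan is to exploit the Youla parameterization introduced in the preliminaries. For any noncausal controller $K$, set $Q = K(I - JK)^{-1}$; the closed-loop response is $u = Q(Lw+v)$, and the cost becomes
\[\cost(K, w, v) = \|FQ(Lw+v) + Gw\|_2^2 + \|Q(Lw+v)\|_2^2.\]
Writing $z = Lw + v$, this is a quadratic in the vector $Qz$ with Hessian $2(F^\top F + I) \succ 0$, hence strictly convex with unique minimizer $Qz = K^* w$, where $K^* = -(I + F^\top F)^{-1} F^\top G$ is the full-information optimal controller from \cite{goel2020regret}. Consequently, the controller associated with $Q$ is pointwise-optimal on $(w,v)$ iff $Q(Lw+v) = K^* w$, and strictly beaten on $(w,v)$ by any $K'$ whose parameter satisfies $Q'(Lw+v) = K^* w$ otherwise.

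Next I would construct two instances that share the same measurement $z$ but demand different optimal responses. Assuming $K^* \neq 0$, pick $w_1$ with $K^* w_1 \neq 0$ and take
\[(w^{(1)}, v^{(1)}) = (w_1, 0), \qquad (w^{(2)}, v^{(2)}) = (0, L w_1),\]
both of which yield $z = L w_1$. Optimality on the first instance forces $Qz = K^* w_1 \neq 0$, while optimality on the second forces $Qz = 0$; these demands are incompatible. Therefore, for any fixed $Q$ (equivalently, any fixed $K$), there is at least one of these two instances on which some other controller strictly undercuts $\cost(K,\cdot,\cdot)$, contradicting the hypothesized strict global domination by $K$.

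Finally, the degenerate case $K^* = 0$ (i.e.\ $F^\top G = 0$) should be handled separately: here $u = 0$ is always pointwise optimal and $Q = 0$ is a weak global minimizer, but strict global domination still fails on the trivial instance $(w,v) = (0,0)$, where every noncausal controller incurs cost zero. The main obstacle is purely conceptual: one must recognize that in the measurement-feedback setting the Youla parameterization forces $u$ to depend on $(w,v)$ only through the composite measurement $z = Lw + v$, so that two different instances can produce identical measurements and therefore identical achievable controls. Once this is in hand, strict convexity of the cost in $Qz$ closes the argument in a few lines; by contrast, in the full-information setting $w$ itself is observed, so this coupling disappears and a single $K^*$ achieves pointwise optimality.
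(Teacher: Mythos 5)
Your argument is correct, and it takes a genuinely different route from the paper's. The paper applies Lemma \ref{simplifying-lemma}: after conjugating by the unitary operators $\theta,\psi$, the difference $\mathcal{T}_{K'}^{\top}\mathcal{T}_{K'}-\mathcal{T}_K^{\top}\mathcal{T}_K$ has a zero $(1,1)$ block (because only the $(2,2)$ entry of $\theta\mathcal{T}_K\psi$ depends on $Q$), so it can never be positive definite. Unpacked, the first block column of $\psi$ spans exactly the instances with $Lw+v=0$, on which every controller outputs $u=Qz=0$ and ties --- so the paper's proof is, in essence, a polished version of the degenerate observation you relegate to your final paragraph. Your headline argument is different and more informative: you use the Youla parameterization to show $u=Qz$ with $z=Lw+v$, note that the cost is strictly convex in $u$ with unique minimizer $K^{*}w$ depending on $w$ alone, and then exhibit two instances $(w_1,0)$ and $(0,Lw_1)$ that alias to the same measurement $z=Lw_1$ while demanding different optimal controls. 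This pinpoints the structural reason a dominant offline controller cannot exist (measurement aliasing) and actually proves something stronger than the theorem: no noncausal controller is even \emph{weakly} pointwise-optimal on every instance, whereas the paper only rules out strict domination.

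Two loose ends to tie up. First, to conclude that some $K'$ strictly undercuts $K$ on the offending instance you must exhibit a $Q'$ with $Q'z=K^{*}w$ there, which requires $z=Lw_1\neq 0$; choosing $w_1$ with $K^{*}w_1\neq 0$ guarantees $Gw_1\neq 0$ but not $Lw_1\neq 0$ (e.g.\ if $C_t\equiv 0$ then $L=0$). When $Lw_1=0$ all controllers tie on $(w_1,0)$, which still refutes strict domination, but you should say so explicitly rather than leave it to the $(0,0)$ remark. Second, recovering $K'$ from your rank-one $Q'$ via $K'=Q'(I+Q'J)^{-1}$ needs $I+Q'J$ invertible; this holds for a generic choice of $Q'$ satisfying $Q'z=K^{*}w$ and is worth one sentence.
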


\noindent In other words, no noncausal controller can guarantee that it achieves the lowest possible cost on every instance $(w, v)$. Before we present the proof of Theorem (\ref{nonexistence-thm}), we state a key lemma which plays a central role in all of the results of this paper:

\begin{lemma} \label{simplifying-lemma}
Define \begin{equation} \label{stuv}
    S = I + FF^{\top}, \hspace{5mm} T = I + F^{\top}F, \hspace{5mm} U =  I + LL^{\top}, \hspace{5mm} V = I + L^{\top}L
\end{equation}
and let
\begin{equation} \label{thetapsi}
\theta = \begin{bmatrix} S^{-1/2} & 0 \\ 0 & T^{-1/2} \end{bmatrix} \begin{bmatrix} I & - F \\ F^{\top} & I \end{bmatrix},  \hspace{5mm}
\psi = \begin{bmatrix} I & L^{\top} \\ -L & I \end{bmatrix}\begin{bmatrix} V^{-1/2} & 0 \\ 0 & U^{-1/2} \end{bmatrix}  .
\end{equation}
Let $K$ be any controller and let $\mathcal{T}_K$ be the transfer operator associated to $K$. The following identity holds
\begin{align} \label{tk-theta-psi}
\theta \mathcal{T}_{K} \psi = \begin{bmatrix} S^{-1/2}GV^{-1/2} &  S^{-1/2}GL^{\top}U^{-1/2} \\ T^{-1/2}F^{\top}GV^{-1/2} &  T^{1/2}QU ^{1/2} + T^{-1/2}F^{\top}GL^{\top}U^{-1/2}
\end{bmatrix}.
\end{align}
\end{lemma}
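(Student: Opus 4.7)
I would verify the identity by direct block-matrix computation. The operators $\theta$ and $\psi$ are engineered so that when multiplied on either side of $\mathcal{T}_K$, nearly all of the $Q$-dependence cancels; only a single isolated term $T^{1/2}QU^{1/2}$ survives in the $(2,2)$ block. The computation proceeds in two stages, mirroring the two-factor form of each of $\theta$ and $\psi$.

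In the first stage, I compute $\theta \mathcal{T}_K$. Multiplying $\begin{bmatrix} I & -F \\ F^\top & I \end{bmatrix}$ into $\mathcal{T}_K = \begin{bmatrix} FQL+G & FQ \\ QL & Q \end{bmatrix}$ makes the top-right block vanish outright (since $FQ - FQ = 0$) and collapses the top-left block to $G$; in the bottom row, the identity $F^\top F + I = T$ consolidates the $Q$-terms into $TQL + F^\top G$ and $TQ$. Pre-multiplying by $\diag(S^{-1/2}, T^{-1/2})$ then yields
\[
\theta \mathcal{T}_K = \begin{bmatrix} S^{-1/2}G & 0 \\ T^{-1/2}TQL + T^{-1/2}F^\top G & T^{-1/2}TQ \end{bmatrix}.
\]

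In the second stage, I post-multiply by $\psi$, written explicitly as $\begin{bmatrix} V^{-1/2} & L^\top U^{-1/2} \\ -LV^{-1/2} & U^{-1/2} \end{bmatrix}$. The first row of the product is immediate from the explicit first row of $\theta\mathcal{T}_K$. A symmetric cancellation occurs in the $(2,1)$ block: the $T^{-1/2}TQL\cdot V^{-1/2}$ contribution cancels against the $T^{-1/2}TQ \cdot (-LV^{-1/2})$ contribution, leaving only $T^{-1/2}F^\top G V^{-1/2}$. In the $(2,2)$ block, the surviving terms combine via $LL^\top + I = U$ into $T^{-1/2}TQ \cdot U \cdot U^{-1/2} + T^{-1/2}F^\top G L^\top U^{-1/2}$, which collapses to the claimed form upon using $T^{-1/2}T = T^{1/2}$ and $U \cdot U^{-1/2} = U^{1/2}$ from the defining property of the square-root notation.

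The main obstacle is purely bookkeeping: there are many cross-terms in the $4 \times 4$ block product and it is easy to drop a sign or mistranscribe a subscript. Conceptually, the two ``magic'' cancellations---the $Q$-dependence disappearing from the $(1,2)$ entry of $\theta \mathcal{T}_K$ and from the $(2,1)$ entry of $\theta \mathcal{T}_K \psi$---are essentially the content of the lemma, and everything else is routine.
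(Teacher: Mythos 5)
Your direct block computation is correct and takes the same route as the paper, which simply asserts that the lemma is ``easily verified via direct calculation''; your two-stage multiplication (first $\theta\mathcal{T}_K$, exploiting $FQ-FQ=0$ and $I+F^{\top}F=T$, then post-multiplication by $\psi$, exploiting the cancellation in the $(2,1)$ block and $I+LL^{\top}=U$) supplies exactly that verification. The only step to watch is the final simplification $T^{-1/2}T=T^{1/2}$ and $UU^{-1/2}=U^{1/2}$, which is immediate for a symmetric square root but needs the same implicit care the paper itself takes with its causal square-root convention.
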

This lemma is easily verified via direct calculation; its significance is that the matrix $\theta \mathcal{T}_{K} \psi$ depends on $Q$ only in the (2, 2) entry instead of in all four entries, which greatly simplifies our computations. We now return to the proof of Theorem (\ref{nonexistence-thm}):

\vspace{5mm}

\begin{proof}
Suppose by way of contradiction that there was some noncausal controller $K$ such that $\cost(K, w, v) <  \cost(K', w, v)$ for all noncausal controllers $K' \neq K$ and all instances $(w, v)$. This would imply that \begin{align} \label{offline-domination-cond}
 \mathcal{T}_{K'}^{\top}\mathcal{T}_{K'} - \mathcal{T}_K^{\top} \mathcal{T}_K  \succ 0.
\end{align}
Let  $\theta$ and $\psi$ be defined as in (\ref{thetapsi}). Because $\theta$ and $\psi$ are unitary, condition (\ref{offline-domination-cond}) is equivalent to
\begin{align} \label{thetapsi-offline-domination-cond}
(\theta \mathcal{T}_{K'} \psi)^{\top} (\theta \mathcal{T}_{K'} \psi) - (\theta \mathcal{T}_K \psi)^{\top} (\theta \mathcal{T}_K \psi) \succ 0.
\end{align}
In light of Lemma (\ref{simplifying-lemma}), the matrix on the left-hand side of (\ref{thetapsi-offline-domination-cond}) simplifies to
\begin{align*} 
\begin{bmatrix} 0 &  X(Q, Q') \\ X^{\top}(Q, Q') &  Y(Q, Q')
\end{bmatrix}
\end{align*}
where $X(\cdot, \cdot)$ and $Y(\cdot, \cdot)$ are appropriately defined functions of $Q$ and $Q'$. The fact that the (1, 1) block of this matrix is zero ensures that it cannot be positive definite.
\end{proof}

Given that no noncausal controller dominates every other, we are now faced with the question of which noncausal controller we should design our online controller to compete against. In this paper, we choose to benchmark against the noncausal controller which is optimal in the $H_2$ sense. By this, we mean the noncausal controller $K_{nc}$ whose associated transfer operator $$\mathcal{T}_{K_{nc}}:   \begin{bmatrix} w \\ v \end{bmatrix}  \rightarrow \begin{bmatrix} s \\ u \end{bmatrix} $$ is smallest in the Frobenius norm. This noncausal controller is also the one which minimizes the expected LQR cost under the assumption that $w$ and $v$ are both random variables with zero mean and bounded variance. We note that are several other natural choices of benchmark controllers; for example, one could instead choose the more pessimistic $H_{\infty}$-optimal noncausal controller. We leave such comparisons for future work. We prove:

\begin{theorem}
The $H_2$-optimal noncausal controller has the form $K_{nc} = Q(I + QJ)^{-1}$ and associated transfer operator $$\mathcal{T}_{K_{nc}} = \begin{bmatrix}  FQL + G & F \\ QL &  Q \end{bmatrix},$$ where $Q = -T^{-1}F^{\top}GL^{\top}U^{-1}$ and $T$ and $U$ are defined as in Lemma (\ref{simplifying-lemma}).
\end{theorem}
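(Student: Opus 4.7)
The plan is to exploit Lemma (\ref{simplifying-lemma}) together with the observation that the transformations $\theta$ and $\psi$ are orthogonal, so that the Frobenius-norm minimization defining $K_{nc}$ reduces to an unconstrained least-squares problem in $Q$.

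First I would verify, by direct calculation, that $\theta^{\top}\theta = I$ and $\psi\psi^{\top} = I$; this boils down to the identity $(I+FF^{\top})^{-1}F = F(I+F^{\top}F)^{-1}$ (and its analog for $L$), together with the definitions of $S,T,U,V$. This is exactly the unitarity used in the proof of Theorem (\ref{nonexistence-thm}). Once the orthogonality of $\theta$ and $\psi$ is in hand, the Frobenius norm is preserved:
$$\|\mathcal{T}_K\|_F^2 \;=\; \|\theta\,\mathcal{T}_K\,\psi\|_F^2.$$

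Next I would apply Lemma (\ref{simplifying-lemma}) and inspect the four blocks of $\theta \mathcal{T}_K \psi$ on the right-hand side of (\ref{tk-theta-psi}). The (1,1), (1,2), and (2,1) blocks are functions purely of $F,G,L,S,T,U,V$ and therefore do not depend on the choice of controller at all; only the (2,2) block involves $Q$, and it has the affine form
$$T^{1/2}\,Q\,U^{1/2} \;+\; T^{-1/2}F^{\top}GL^{\top}U^{-1/2}.$$
Consequently $\|\theta \mathcal{T}_K \psi\|_F^2$ decomposes as a $Q$-independent constant plus the squared Frobenius norm of the (2,2) block. Since $K_{nc}$ is unconstrained (noncausal), $Q$ may be taken arbitrary, and the (2,2) block can be driven to zero by the unique choice
$$Q \;=\; -T^{-1}F^{\top}GL^{\top}U^{-1},$$
which is exactly the expression in the theorem.

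Finally, to obtain $K_{nc}$ and its transfer operator, I would invoke the Youla inversion (\ref{kfromq}) to recover $K_{nc} = Q(I + QJ)^{-1}$, and substitute this $Q$ into the general expression $\mathcal{T}_K = \begin{bmatrix} FQL + G & FQ \\ QL & Q \end{bmatrix}$ derived in the preliminaries, yielding the stated block form. The proof is essentially mechanical after Lemma (\ref{simplifying-lemma}); the only subtlety worth highlighting is the role of the particular unitaries $\theta,\psi$, which are engineered precisely so that the controller parameter $Q$ appears in just one block of the transformed transfer operator, thereby decoupling the optimization. I do not expect any serious obstacle beyond careful bookkeeping of the block algebra.
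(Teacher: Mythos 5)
Your proposal is correct and follows essentially the same route as the paper's proof: unitary invariance of the Frobenius norm under $\theta$ and $\psi$, the observation via Lemma (\ref{simplifying-lemma}) that $Q$ enters only the (2,2) block, zeroing that block to get $Q = -T^{-1}F^{\top}GL^{\top}U^{-1}$, and recovering $K_{nc}$ via (\ref{kfromq}). The only addition is your explicit verification of unitarity, which the paper takes for granted.
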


\begin{proof}
Let $\theta$ and $\psi$ be defined as in Lemma (\ref{simplifying-lemma}).
Notice that $\theta$ and $\psi$ are unitary, hence $\|\theta \mathcal{T}_K \psi \|^2_F = \|\mathcal{T}_K \|^2_F$ since the Frobenius norm is unitarily invariant. It hence suffices to minimize $\|\theta \mathcal{T}_K \psi \|^2_F$ over $Q$. Looking at the statement of Lemma (\ref{simplifying-lemma}), we notice that $Q$ appears only in the (2, 2) entry of $\theta \mathcal{T}_K \psi$, so $\|\mathcal{T}_K\|_F^2$ is minimized by choosing $Q$ such that this entry is zero:  $$Q  =  -T^{-1}F^{\top}GL^{\top}U^{-1}.$$ We can easily recover the $H_2$-optimal $K$ from this choice of $Q$ using identity (\ref{kfromq}).






\end{proof}

\section{Derivation of the regret-optimal measurement-feedback controller}
We now turn to the problem of deriving a causal measurement-feedback controller $K_c$ which tracks $K_{nc}$ as closely as possible, where $K_{nc}$ is the $H_2$-optimal noncausal measurement-feedback controller derived in section \ref{noncausal-sec}. We call $K_c$ the regret-optimal measurement-feedback controller, and show that it can be found via a reduction to the classical Nehari problem of robust control.

We give a high-level summary of our proof technique before turning to the derivation of $K_c$. We first solve the regret-suboptimal problem; in other words, for a given performance level $\gamma > 0$, we show how to find a causal controller $K$ such that 
\begin{align} \label{suboptimal-cond}
\frac{\cost(K, w, v) - \cost(K_{nc}, w, v)}{\|w\|_2^2 + \|v\|_2^2} < \gamma^2 
\end{align}
for all disturbances $w$ and all measurement noise $v$, or to determine whether no such $K$ exists. We show this problem is equivalent to a Nehari problem with performance level 1 and an input matrix $W_{-, \gamma}$; once this Nehari problem is solved, we can easily recover the desired $K$. Conversely, if this Nehari problem has no solution the desired $K$ does not exist.  Once the suboptimal problem is solved, the regret-optimal controller is easily found via bisection on $\gamma$, in the same way $H_{\infty}$ controllers can be found once the $H_{\infty}$ suboptimal problem is solved.

We now state our main result:

\begin{theorem}
There exists a regret-suboptimal measurement-feedback controller $K$ at level $\gamma$ exists if and only if there exists a causal matrix $X$ such that $\|X - W_{-, \gamma}\| \leq 1$, where $W_{+, \gamma}$ and $W_{-, \gamma}$ are the causal and strictly anticausal components of the matrix
$$W_{\gamma} = -M_{\gamma}^{-1}T^{-1/2}F^{\top}GL^{\top}U^{-1/2}, $$
 $$M_{\gamma} = (\gamma^{-2}I + \gamma^{-4}T^{-1/2}F^{\top}GV^{-1} G^{\top}FT^{-1/2})^{1/2},$$
and $T, U, V$ are defined as in Lemma (\ref{simplifying-lemma}).
If such an $X$ exists, then $K$ has the following form: $$ K = Q(I + QJ)^{-1}, $$
where $$Q = T^{-1/2}M_{\gamma}^{-1}(X + W_{+, \gamma})U^{-1/2}.$$
The regret-optimal measurement-feedback controller $K_{c}$ is the regret-suboptimal measurement-feedback controller at level $\gamma_{opt},$ where $\gamma_{opt}$ can be found by bisection on $\gamma$.
Furthermore, the regret incurred by $K_c$ against $K_{nc}$ on the instance $(w, v)$ is at most $\gamma^2_{opt}(\|w\|_2^2 + \|v\|_2^2),$ and this bound is tight.
\end{theorem}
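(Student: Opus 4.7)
The plan is to reduce the regret-suboptimality condition to a Nehari problem in three stages: (i) convert it to an operator inequality and strip controller-independent terms using the unitary factors $\theta,\psi$ from Lemma~\ref{simplifying-lemma}; (ii) apply a Schur complement to collapse the inequality into a single operator-norm bound; and (iii) perform a causality-preserving change of variable that exposes the Nehari structure. Bisection on $\gamma$ then yields the regret-optimal controller.

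First I would rewrite \eqref{suboptimal-cond} as $\mathcal{T}_K^{\top}\mathcal{T}_K-\mathcal{T}_{K_{nc}}^{\top}\mathcal{T}_{K_{nc}}\prec\gamma^2 I$ and, using that $\theta$ and $\psi$ are unitary, conjugate to the equivalent inequality $(\theta\mathcal{T}_K\psi)^{\top}(\theta\mathcal{T}_K\psi)-(\theta\mathcal{T}_{K_{nc}}\psi)^{\top}(\theta\mathcal{T}_{K_{nc}}\psi)\prec\gamma^2 I$. By Lemma~\ref{simplifying-lemma}, only the $(2,2)$-block $D(Q):=T^{1/2}QU^{1/2}+T^{-1/2}F^{\top}GL^{\top}U^{-1/2}$ depends on $Q$, and $D(Q_{nc})=0$ for the benchmark, so a direct block computation collapses the left-hand side to $\begin{bmatrix}0 & C^{\top}D\\ D^{\top}C & D^{\top}D\end{bmatrix}$ where $C:=T^{-1/2}F^{\top}GV^{-1/2}$. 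Applying the Schur complement (the $(1,1)$-block is trivially negative definite) then gives the equivalent condition $D^{\top}(I+\gamma^{-2}CC^{\top})D\prec\gamma^2 I$, or equivalently $\|M_\gamma D(Q)\|<1$, since $M_\gamma^{\top}M_\gamma=\gamma^{-2}I+\gamma^{-4}CC^{\top}$ under the causal-square-root convention.

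The key step is the change of variable $\tilde Q:=M_\gamma T^{1/2}QU^{1/2}$. Because $M_\gamma$, $T^{1/2}$, $U^{1/2}$ and their inverses are all causal, this is a bijection on the space of causal matrices, and under it the condition becomes $\|\tilde Q-W_\gamma\|<1$ for the matrix $W_\gamma$ built from the fixed data of the problem. Splitting $W_\gamma=W_{+,\gamma}+W_{-,\gamma}$ into causal and strictly anticausal pieces and setting $X:=\tilde Q-W_{+,\gamma}$ (still causal) reduces the problem to the Nehari condition $\|X-W_{-,\gamma}\|\le 1$. Inverting the change of variable yields $Q=T^{-1/2}M_\gamma^{-1}(X+W_{+,\gamma})U^{-1/2}$ and $K=Q(I+QJ)^{-1}$ via \eqref{kfromq}, establishing both the if-and-only-if and the announced formula.

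Finally, since the chain of equivalences preserves $\gamma$, the infimum $\gamma_{opt}$ over feasible levels is located by bisection, and $K_c$ is the controller produced from any $X$ attaining the Nehari minimum at $\gamma_{opt}$. Tightness follows from a sandwich: the supremum regret ratio achieved by $K_c$ is at most $\gamma_{opt}^2$ by construction, and if it were strictly smaller then $K_c$ would satisfy the regret-suboptimality condition at some $\gamma<\gamma_{opt}$, contradicting the definition of $\gamma_{opt}$ as an infimum of feasible levels. The most delicate step, in my view, is verifying that the change of variable $\tilde Q\leftrightarrow Q$ preserves causality in both directions --- this is precisely where defining $M_\gamma$ as the \emph{causal} square root is essential, since without causal invertibility of $M_\gamma$ the Nehari reduction cannot be closed.
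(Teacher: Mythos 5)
Your proposal is correct and takes essentially the same route as the paper's own proof: unitary conjugation by $\theta,\psi$, a Schur complement to collapse the block inequality to $\|M_\gamma(T^{1/2}QU^{1/2}-W)\|<1$, the causality-preserving change of variables through $M_\gamma$, the split $W_\gamma=W_{+,\gamma}+W_{-,\gamma}$ reducing to the suboptimal Nehari problem, and bisection on $\gamma$ with tightness from minimality of $\gamma_{opt}$. Your derivation of $W_\gamma=M_\gamma W$ also matches the paper's proof body (the $M_\gamma^{-1}$ appearing in the theorem statement is an inconsistency in the paper, not in your argument).
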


\begin{proof}
Condition (\ref{suboptimal-cond}) can be expressed in terms of transfer operators as 
\begin{align*} 
 \begin{bmatrix} w \\ v \end{bmatrix}^{\top}(\mathcal{T}_K^{\top} \mathcal{T}_K - \mathcal{T}_{K_{nc}}^{\top}\mathcal{T}_{K_{nc}})\begin{bmatrix} w \\ v \end{bmatrix} < \gamma^2 \left\|  \begin{bmatrix} w \\ v \end{bmatrix} \right\|_2^2, 
\end{align*}
 or even more cleanly as 
\begin{align} \label{regret-condition}
\mathcal{T}_K^{\top} \mathcal{T}_K - \mathcal{T}_{K_{nc}}^{\top}\mathcal{T}_{K_{nc}} \prec \gamma^2 I.
\end{align}
Let  $\theta$ and $\psi$ be defined as in (\ref{thetapsi}). Because $\theta$ and $\psi$ are unitary, condition (\ref{regret-condition}) is equivalent to
\begin{align} \label{regret-condition-2}
\gamma^2I - (\theta \mathcal{T}_K \psi)^{\top} (\theta \mathcal{T}_K \psi) + (\theta \mathcal{T}_{K_{nc}} \psi)^{\top} (\theta \mathcal{T}_{K_{nc}} \psi) \succ 0
\end{align}

\noindent Applying identity (\ref{tk-theta-psi}), we can expand the matrix on the left-hand side of (\ref{regret-condition-2}) as
\begin{align} \label{regret-matrix}
\begin{bmatrix} \gamma^2 I &  V^{-1/2} G^{\top}FT^{-1/2} (W - T^{1/2}QU ^{1/2}) \\ (W - T^{1/2}QU ^{1/2})^{\top}T^{-1/2}F^{\top}GV^{-1/2} & \gamma^2I - (T^{1/2}QU ^{1/2} -  W)^{\top}(T^{1/2}QU ^{1/2} - W)
\end{bmatrix}
\end{align}
where $T, U, V$ are defined in (\ref{stuv}), we set $W = -T^{-1/2}F^{\top}GL^{\top}U^{-1/2}$, and we used the fact the (2, 2) entry of $\theta \mathcal{T}_{K_{nc}} \psi$ is zero. Clearly $\gamma^2I$ is positive definite, so the matrix (\ref{regret-matrix}) is positive definite if and only if the Schur complement 
\begin{equation*}
\gamma^2I - (T^{1/2}QU ^{1/2} -  W)^{\top}(I + \gamma^{-2}T^{-1/2}F^{\top}GV^{-1} G^{\top}FT^{-1/2}) (T^{1/2}QU ^{1/2} -  W)
\end{equation*}
is positive definite. Dividing by $\gamma^2$ and rearranging, we see that this condition is equivalent to 
\begin{equation} \label{contraction-cond}
(T^{1/2}QU ^{1/2} -  W)^{\top}(\gamma^{-2}I + \gamma^{-4}T^{-1/2}F^{\top}GV^{-1} G^{\top}FT^{-1/2}) (T^{1/2}QU ^{1/2} -  W) \preceq 1.
\end{equation}
We have shown that there exists a causal controller $K$ satisfying (\ref{suboptimal-cond}) if and only if there exists a causal matrix $Q$ satisfying condition (\ref{contraction-cond}).

We now show that the problem of finding such a $Q$ (or determining whether no such $Q$ exists) can be reduced to a Nehari problem with an appropriate change of variables. Define $$M_{\gamma} = (\gamma^{-2}I + \gamma^{-4}T^{-1/2}F^{\top}GV^{-1} G^{\top}FT^{-1/2})^{1/2},$$ and let $ W_{\gamma} = M_{\gamma}W$. Let $W_{\gamma, +}$ and $W_{\gamma, -}$ denote the causal and strictly anticausal parts of $W_{\gamma}$, respectively. Define $$X = M_{\gamma}T^{1/2}QU^{1/2} - W_{+, \gamma}.$$ We emphasize that $X$ is causal if and only if $Q$ is causal, since $W_{+, \gamma}, M_{\gamma}, T^{1/2}$, and $U^{1/2}$ are all causal; furthermore, we can easily recover $Q$ from $X$: 
\begin{align} \label{qfromx}
Q = T^{-1/2}M_{\gamma}^{-1}(X + W_{+, \gamma})U^{-1/2}.
\end{align}
Notice that condition (\ref{contraction-cond}) can be written as $$\| X - W_{-, \gamma}\|_2 \leq 1.$$ 
We recognize the problem of finding a causal $X$ satisfying this condition as an instance of the suboptimal Nehari problem, where the desired performance level is 1 and the input anticausal matrix is $W_{-, \gamma}$. If the desired $X$ exists then $K$ is easily found using the identities (\ref{kfromq}) and (\ref{qfromx}); conversely, if this Nehari problem has no solution then a regret-suboptimal controller at level $\gamma$ does not exist.

Now that we know how to solve the regret-suboptimal problem, we can easily find the regret-optimal controller $K_c$ by iteratively decreasing $\gamma$ until convergence to some $\gamma_{opt}.$ Rearranging condition (\ref{suboptimal-cond}), we immediately obtain the data-dependent regret bound $\gamma^2_{opt}(\|w\|_2^2 + \|v\|_2^2)$; tightness of this bound follows from the minimality of $\gamma_{opt}$.

\end{proof}

\newpage
\bibliography{main}

\end{document}